\newcommand{\Nat}{{\mathbb{N}}}
\newcommand{\Reals}{{\mathbb{R}}}
\newcommand{\tta}{{\mathtt{a}}}
\newcommand{\ttb}{{\mathtt{b}}}
\newcommand{\ttc}{{\mathtt{c}}}
\newcommand{\size}[1]{\mathopen{\mid}#1\mathclose{\mid}}
\newcommand{\subword}{\preccurlyeq}
\newcommand{\supword}{\succcurlyeq}
\newcommand{\egdef}{\stackrel{\mbox{\begin{scriptsize}def\end{scriptsize}}}{=}}
\newcommand{\equivdef}{\stackrel{\mbox{\begin{scriptsize}def\end{scriptsize}}}{\Leftrightarrow}}
\newcommand{\Bcal}{{\mathcal{B}}}
\newcommand{\obracew}[2]{{\overset{#2}{\overbrace{#1}}}}
\newcommand{\oneset}[1]{\left\{\mathinner{#1}\right\}}
\newcommand{\smallset}[1]{\left\{\mathinner{#1}\right\}}
\newcommand{\floor}[1]{\left\lfloor\mathinner{#1} \right\rfloor}
\newtheorem{theorem}{Theorem}[section]
\newtheorem{lemma}[theorem]{Lemma}
\newtheorem{proposition}[theorem]{Proposition}
\newproof{proof}{Proof}
\begin{document}

\begin{frontmatter}
\title{On the index of Simon's congruence for piecewise testability}
\author[lsv,cmi]{P. Karandikar\fnref{tcs,anr}}
\author[fmi]{M. Kuf\-leitner\fnref{dfg}}
\author[lsv]{Ph. Schnoebelen\fnref{anr}}
\address[lsv]{Lab.\  Specification \& Verification, CNRS UMR 8643 \& ENS Cachan, France}
\address[cmi]{Chennai Mathematical Institute, Chennai, India}
\address[fmi]{Institut f{\"u}r Formale Methoden der Informatik, University of Stuttgart, Germany}
\fntext[tcs]{Partially supported by Tata Consultancy Services.}
\fntext[anr]{Supported by ANR grant 11-BS02-001-01.}
\fntext[dfg]{Supported by DFG grant \mbox{DI~435/5-2}.}

\begin{abstract}
Simon's congruence, denoted $\sim_n$, relates words having the same
subwords of length up to $n$. We show that, over a $k$-letter
alphabet, the number of words modulo $\sim_n$ is in
$2^{\Theta(n^{k-1}\log\, n)}$.
\end{abstract}

\begin{keyword}
Combinatorics of words; Piecewise testable languages; Subwords and subsequences.
\end{keyword}
\end{frontmatter}

\section{Introduction}
\label{sec-intro}

Piecewise testable languages, introduced by Imre Simon in the 1970s,
are a family of star-free regular languages that are definable by the
presence and absence of given (scattered) subwords~\cite{simon75,sakarovitch83,pin86}. 
Formally, a language
$L\subseteq A^*$ is $n$-piecewise testable if $x\in L$ and $x\sim_n y$
imply $y\in L$, where $x\sim_n y$ $\equivdef$ $x$ and $y$ have the
same subwords of length at most $n$ (see next section for
all definitions missing in this introduction). Piecewise testable
languages are important because they are the languages defined by
$\Bcal\Sigma_1$ formulae, a simple fragment of first-order logic that
is prominent in database queries~\cite{DGK-ijfcs08}. They also occur in learning
theory~\cite{kontorovich2008}, computational
linguistics~\cite{rogers2010}, etc.

It is easy to see that $\sim_n$ is a congruence with finite index and
Sakarovitch and Simon raised the question of how to better
characterize or evaluate this number~\cite[p.~110]{sakarovitch83}. Let
us write $C_k(n)$ for the number of $\sim_n$ classes over $k$ letters, i.e.,
when $\size{A}=k$. It is clear that $C_k(n)\geq k^n$ since two
 words $x,y\in A^{\leq n}$ (i.e., of length at most $n$) are related
by $\sim_n$ only if they are equal. In fact, this reasoning gives
\begin{gather}
\label{eq-naive-lower-bound}
C_k(n) \geq k^n + k^{n-1} + \cdots + k + 1 = \frac{k^{n+1} - 1}{k-1}
\end{gather}
(assuming $k\neq 1$).
On the other hand, any congruence class in $\sim_n$ is completely
characterized by a set of subwords in $A^{\leq n}$, hence
\begin{gather}
\label{eq-naive-upper-bound}
C_k(n)\leq 2^{\frac{k^{n+1} - 1}{k-1}}
\:.
\end{gather}
Estimating the size of $C_k(n)$ has applications in
descriptive complexity, for example for estimating the number of
$n$-piecewise testable languages (over a given alphabet), or for
bounding the size of canonical automata for $n$-piecewise testable
languages~\cite{czerwinski2013,klima2013,place2013}.

Unfortunately the above bounds, summarized as $k^n\leq C_k(n)\leq
2^{k^{n+1}}$, leave a large (``exponential'') gap and it is not clear
towards which side is the actual value leaning.\footnote{Comparing the bounds
from Eqs.~\eqref{eq-naive-lower-bound} and \eqref{eq-naive-upper-bound} with actual values  does not bring much light here
since the magnitude of $C_k(n)$ makes it hard to compute beyond some
very small values of $k$ and $n$, see Table~\ref{table-C}.}
Eq.~\eqref{eq-naive-lower-bound} gives a lower bound that is obviously
very naive since it only counts the simplest classes. On the other
hand, Eq.~\eqref{eq-naive-upper-bound} too makes wide simplifications
since not every subset of $A^{\leq n}$ corresponds to a
congruence class. For
example, if $\tta\tta$ and $\ttb\ttb$ are subwords of some $x$ then
necessarily $x$ also has $\tta\ttb$ or $\ttb\tta$ among its length 2 subwords.

Since the question of estimating $C_k(n)$ was raised
in~\cite{sakarovitch83} (and to the best of our knowledge) no progress has been made on the
question, until K\'atai-Urb\'an et al.\  proved the following bounds:
\begin{theorem}[\citet{KPPPS2012}] For all $k>1$,
\label{theo-kppps}
\begin{alignat*}{2}
\frac{k^n}{3^{n^2}}  \log \, k& \leq \log \, C_k(n) < 3^n k^n  \log \, k
&&\quad\text{if $n$ is even,}
\\
\frac{k^n}{3^{n^2}}& < \log \, C_k(n) < 3^n k^n
&&\quad\text{if $n$ is odd.}
\end{alignat*}
\end{theorem}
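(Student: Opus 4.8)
\emph{Proof plan.}
The two upper bounds are, up to a trivial estimate, just the naive bound~\eqref{eq-naive-upper-bound}: an $\sim_n$-class is determined by its set of subwords of length at most $n$, so $\log C_k(n)\le\frac{k^{n+1}-1}{k-1}\le 2k^n$ for $k\ge 2$, which is well below $3^nk^n\le 3^nk^n\log k$ for $n\ge 1$, $k\ge 2$. So I would clear both upper bounds in a couple of lines; the even/odd split is cosmetic, and one may even afford the mild sharpening of~\eqref{eq-naive-upper-bound} obtained by discarding the non-realizable subword-sets (those killed by a consistency constraint such as ``$\tta\tta\subword x$ and $\ttb\ttb\subword x$ force $\tta\ttb\subword x$ or $\ttb\tta\subword x$''), though it changes nothing asymptotically. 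All the substance is in the lower bounds.

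For the lower bounds the plan is to construct, over a $k$-letter alphabet, a family of $2^{\Omega(k^n/3^{n^2})}$ pairwise non-$\sim_n$-equivalent words -- and of size $k^{\Omega(k^n/3^{n^2})}$ when $n$ is even -- so that $\log C_k(n)\ge k^n/3^{n^2}$, respectively $(k^n/3^{n^2})\log k$. The idea is to work in the middle of the subword order rather than near its top, since near the top one cannot delete even a single length-$n$ subword without violating realizability. I would first fix a ``scaffold'' word $z$ having every word of length $<n$ as a subword, e.g.\ the concatenation of $n-1$ copies of a fixed listing $a_1a_2\cdots a_k$ of the alphabet; one checks that the length-$n$ words that are \emph{not} subwords of this $z$ are precisely the weakly decreasing ones, of which there are $\binom{k+n-1}{n}\ge k^n/n!\ge k^n/3^{n^2}$. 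I would take for $W$ a large subfamily of these (excluding a few more so that the gadgets below do not interfere), still with $\size{W}\ge k^n/3^{n^2}$, and then, for each $u\in W$, build a short ``switch'' word with states $0,1$ (for even $n$: states $0,\dots,k-1$) that are mutually indistinguishable at lengths $<n$ but such that $u$ is a subword of the switch exactly in the nonzero states while no element of $W$ other than $u$ is ever a subword of it. A switch for $u=u_1\cdots u_n$ would look like $u_1\cdots u_j$, a long ``blocking'' run of one fixed letter, then $u_{j+1}\cdots u_n$; balancing the two halves forces $j=\floor{n/2}$, and the asymmetry $\floor{n/2}\neq\ceil{n/2}$ for odd $n$ -- versus equal halves, and room for $k$-ary switches, for even $n$ -- is exactly what produces the even/odd gap. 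Concatenating $z$ with all $\size{W}$ switches in a fixed order, one per $u$, in chosen states, yields a word $x_b$ whose length-$n$ subwords lying in $W$ are exactly those $u$ whose switch is nonzero, while those not in $W$ -- created inside $z$ or at junctions -- are, by indistinguishability at lengths $<n$, independent of $b$. Hence $x_b\sim_n x_{b'}$ forces $b=b'$, and $C_k(n)\ge 2^{\size{W}}$ (resp.\ $k^{\size{W}}$), giving the claimed bounds.

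The hard part will be choosing $W$ and proving correctness of the switches. The crude factor $3^{n^2}$ is conceded precisely when we keep only a $3^{-n^2}$-fraction of $A^n$: first the weakly decreasing words (an $n!$ loss) and then a further restriction on run lengths and supports so that a switch for $u$ creates no unintended length-$n$ subword and destroys no intended one; finding a clean such $W$ that is still this large is the combinatorial heart. The verification itself is a Simon-style case analysis of how a word of length $\le n$ can embed into $x_b$ -- no spurious length-$n$ subword at a switch or across a junction or blocking run, every nonzero-state $u$ actually present, and all shorter subwords exactly those of $z$ -- and I expect this junction bookkeeping to be the most delicate and error-prone step. Once $\size{W}\ge k^n/3^{n^2}$ is certified (with the $k$-ary upgrade installed for even $n$), the two lower bounds follow at once.
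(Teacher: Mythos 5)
This statement is not proved in the paper at all: it is quoted from \citet{KPPPS2012}, and the text merely records the two reductions on which that proof rests, namely $C_{k+\ell}(n+2)\geq C_k^{\ell+2}(n)$ for the lower bounds and $C_k(n+2)\leq (k+1)^{2k}C_k^{2k-1}(n)$ for the upper bounds, iterated in steps of $2$ on $n$ from the base cases $n=1$ and $n=2$ (which is where the even/odd split comes from). Your route is therefore genuinely different, and half of it is fine: since $\log_2 C_k(n)\leq\frac{k^{n+1}-1}{k-1}\leq 2k^n<3^nk^n\leq 3^nk^n\log_2 k$ for $n\geq 1$ and $k\geq 2$, both stated upper bounds really do follow from Eq.~\eqref{eq-naive-upper-bound} alone. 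Your observation that the length-$n$ words missed by the scaffold $(\tta_1\cdots\tta_k)^{n-1}$ are exactly the weakly decreasing ones, of which there are $\binom{k+n-1}{n}\geq k^n/n!\geq k^n/3^{n^2}$, is also correct and a sensible starting point.

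The gap is that the entire lower bound---the only nontrivial part---is left as a plan. You have not defined $W$, not defined the switch gadgets, and not verified the one property everything hinges on: that $x_b\not\sim_n x_{b'}$ whenever $b\neq b'$. You yourself flag the junction bookkeeping as the delicate step, and it genuinely is. For instance, the scaffold contains exactly $n-1$ occurrences of each letter, so any switch whose two states differ in whether they contain some letter $\tta_j$ makes the presence of the subword $\tta_j^n$ depend on that switch's state, interfering with the switch nominally responsible for $\tta_j^n$; and a word $u'\in W$ can embed using pieces of two different switches, so ``no other element of $W$ is a subword of a single switch'' is not the invariant you need. As sketched, the gadget $u_1\cdots u_j\,(\text{blocking run})\,u_{j+1}\cdots u_n$ always contains $u$ as a subword, so the state-$0$ version is not even specified. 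Moreover, your proposed source of the even/odd asymmetry (balanced versus unbalanced halves of a switch, with a $k$-ary upgrade for even $n$) is speculation; in the cited proof the asymmetry simply reflects the two base cases $\log C_k(1)=k\log 2$ versus $\log C_k(2)=\Theta(k\log k)$ of the induction. To complete a proof along your lines you would have to pin down $W$ and the gadgets and carry out the embedding analysis in full; the far shorter route is the reduction $C_{k+\ell}(n+2)\geq C_k^{\ell+2}(n)$, obtained by splitting a word at its occurrences of the last $\ell$ letters in the spirit of Proposition~\ref{prop-Ckn-geq}.
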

The proof is based on two reductions, one showing $C_{k+\ell}(n+2)\geq C_k^{\ell+2}(n)$ for
proving lower bounds, and one showing $C_k(n+2)\leq (k+1)^{2k}
C_k^{2k-1}(n)$ for proving upper bounds.
For fixed $n$, Theorem~\ref{theo-kppps} allows to estimate the
asymptotic value of $\log\, C_k(n)$ as a function of $k$: it is in
$\Theta(k^n)$ or $\Theta(k^n\log\, k)$ depending on the parity of $n$. However,
these bounds do not say how, for fixed $k$, $C_k(n)$ grows as a
function of $n$, which is a more natural question in settings where
the alphabet is fixed, and where $n$ comes from, e.g., the number of
variables in a $\Bcal\Sigma_1$ formula. In particular, the lower bound
is useless for $n\geq k$ since in this case $k^n/3^{n^2}<1$.

\paragraph{Our contribution}
In this article, we provide the following bounds:
\begin{theorem} For all $k,n>1$,
\label{theo-summary}
\begin{align*}
\Bigl(\frac{n}{k}\Bigr)^{k-1} \log_2 \left(\frac{n}{k}\right)
&<
\log_2 C_k(n) \\
&<
k\left(\frac{n+2k-3}{k-1}\right)^{k-1} \log_2 n \log_2 k
\:.
\end{align*}
\end{theorem}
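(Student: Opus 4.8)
The plan is to prove both inequalities by induction on the alphabet size $k$, the base case being $C_1(n)=n+1$ (over one letter $a^i\sim_n a^j$ iff $\min(i,n)=\min(j,n)$, so the classes are $\oneset{a^0},\dots,\oneset{a^{n-1}}$ and $\smallset{a^j : j\ge n}$). Each induction step peels off one letter, turning a $k$-letter question into a $(k-1)$-letter one with a controlled loss in the parameter $n$; unrolling the $k-1$ steps and optimising the split points by the AM--GM inequality is what produces the exponents $k-1$ and the single logarithmic factor. Neither bound appears to follow from Theorem~\ref{theo-kppps}, whose recursion in $n$ alone only gives bounds exponential in $n$.

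For the lower bound I will establish $C_k(n)\ge C_{k-1}(m)^{\,n+1-m}$ for every $1\le m\le n$. Fix such an $m$, put $A':=\smallset{a_1,\dots,a_{k-1}}$ and $t:=n+1-m$, let $N:=C_{k-1}(m)$, and pick representatives $w^{(1)},\dots,w^{(N)}\in (A')^{*}$ of the $\sim_m$-classes over $A'$. For $f\colon\smallset{1,\dots,t}\to\smallset{1,\dots,N}$ set
\[
W(f)\ :=\ w^{(f(1))}\,a_k\,w^{(f(2))}\,a_k\cdots a_k\,w^{(f(t))}.
\]
If $f(i)\neq g(i)$ then $w^{(f(i))}\not\sim_m w^{(g(i))}$, so some $u\in(A')^{*}$ with $\size{u}\le m$ satisfies, say, $u\subword w^{(f(i))}$ but $u\not\subword w^{(g(i))}$. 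Then $a_k^{\,i-1}\,u\,a_k^{\,t-i}$ has length $t-1+\size{u}\le n$, it embeds into $W(f)$, and in any embedding into $W(g)$ its $t-1$ letters $a_k$ must match, in order, all $t-1$ letters $a_k$ of $W(g)$, forcing $u$ into the block $w^{(g(i))}$ between the $(i-1)$-th and $i$-th occurrences of $a_k$ --- impossible. Hence $W(f)\not\sim_n W(g)$ and $C_k(n)\ge N^{\,t}$. Unrolling this recursion down to $C_1$ yields $\log_2 C_k(n)\ge\bigl(\prod_{j=1}^{k-1}t_j\bigr)\log_2(m_1+1)$ with $\sum_{j=1}^{k-1}t_j+m_1=n+k-1$; choosing the $t_j$ nearly equal and $m_1+1$ near $n/k$ and invoking AM--GM gives $\log_2 C_k(n)>(n/k)^{k-1}\log_2(n/k)$ (the statement is vacuous for $n<k$, and the slack in AM--GM and in the integer rounding covers strictness).

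For the upper bound I will run the dual construction. Factoring a word $w$ over $\smallset{a_1,\dots,a_k}$ at its occurrences of $a_k$ as $w=u_0\,a_k\,u_1\cdots a_k\,u_m$ with $u_i\in(A')^{*}$, one shows that $[w]_{\sim_n}$ is determined by the capped count $\min(m,n)$ together with bounded data over the $(k-1)$-letter alphabet $A'$: for each depth $j<n$ the $\sim_j$-class (over $A'$) of a suitable block near the left end and of one near the right end, plus a coarse invariant for the bulk in between. This gives a recursion of the shape
\[
\log_2 C_k(n)\ \le\ \sum_{j=0}^{n-1} O\bigl(\log_2 C_{k-1}(j)\bigr)\ +\ (\text{poly-logarithmic terms}),
\]
and it is exactly the summation over $j$ that turns the $(k-2)$-nd-power bound for $C_{k-1}$ into a $(k-1)$-st-power bound for $C_k$ --- heuristically, summing $(j/(k-1))^{k-2}$ over $j\le n$ gives $\approx(n/(k-1))^{k-1}$. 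Tracking all the additive constants (from the capped count, the bulk invariant, the depth offsets, and the alphabet-size dependence of the bookkeeping) then yields the precise bound $\log_2 C_k(n)<k\bigl(\tfrac{n+2k-3}{k-1}\bigr)^{k-1}\log_2 n\,\log_2 k$.

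I expect the structural lemma behind the upper-bound recursion to be the main obstacle. Simon's congruence is sensitive to \emph{both} ends of a word, and subword embeddings across the $a_k$-separators are nonlocal, so isolating exactly which blocks are relevant (and at which level $\sim_j$) and showing that the deep interior collapses to a bounded invariant is delicate; the natural tool is Simon's inductive description of $\sim_n$ via the first and last occurrence of each letter, but turning it into the quantitative, alphabet-reducing peeling step above is where the real work lies.
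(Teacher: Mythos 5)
Your lower bound is correct and is essentially the paper's own argument: the inequality $C_k(n)\ge C_{k-1}(n-p)^{p+1}$ (your $C_{k-1}(m)^{\,n+1-m}$ with $p=n-m$) is Proposition~\ref{prop-Ckn-geq}, and your distinguishing subwords $a_k^{\,i-1}u\,a_k^{\,t-i}$ are exactly those used in Lemma~\ref{lem-simn-p-xiyi}. The paper applies the inequality once per induction step with $p=\floor{n/(k+1)}$ rather than unrolling all the way to $C_1$ and optimising at the end, but that difference is cosmetic; your choice of nearly equal $t_j$ does yield the stated bound.

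The upper bound, however, has a genuine gap: the structural lemma on which your whole recursion rests is never stated precisely, let alone proved, and you acknowledge as much. More importantly, the factorization you propose---cutting $w$ at the occurrences of the single letter $a_k$---is not the one that makes the argument go through: the blocks $u_i$ are arbitrary words over $A'$, a length-$n$ subword can draw letters from many blocks deep in the interior, and it is not clear that this interior collapses to any invariant of controlled size; ``a coarse invariant for the bulk in between'' is precisely the missing content. The paper's key device is different: it cuts $x$ along its \emph{rich factorization} $x=x_1a_1\cdots x_ma_my$, where each $x_ia_i$ is a shortest factor containing all $k$ letters, so that each $x_i$ and $y$ is poor and hence effectively over $k-1$ letters. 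Richness is what buys locality, via Lemma~\ref{lem-rich-and-sim}: if $x_1$ is $\ell_1$-rich and $x_2$ is $\ell_2$-rich, then $y\sim_n y'$ implies $x_1yx_2\sim_{\ell_1+n+\ell_2}x_1y'x_2$, because any subword of length at most $\ell_1+n+\ell_2$ must spend at least $\ell_i$ of its letters inside $x_i$. This yields $C_k(n)\le 1+\sum_{m=0}^{n-1}k^{m+1}C_{k-1}^{m}(n-m+1)\,C_{k-1}(n-m)$ together with the observation that all words of richness at least $n$ are $\sim_n$-equivalent; note that after taking logarithms the dominant term is of the form $(m+1)\log_2 C_{k-1}(n-m+1)$ maximised over $m$, not the sum $\sum_j\log_2 C_{k-1}(j)$ your sketch predicts. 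Without the richness device, or some substitute for it, your upper-bound plan does not close.
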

Thus, for fixed $k$, $\log\, C_k(n)$ is in $\Theta(n^{k-1}\log\, n)$.
Compared with Theorem~\ref{theo-kppps}, our bounds are much tighter for
fixed $k$ (and much wider for fixed $n$).

The proof of Theorem~\ref{theo-summary} relies on two new
reductions that allows us to relate $C_k(n)$ with $C_{k-1}$
instead of relating it with $C_k(n-2)$ as in~\cite{KPPPS2012}. The
article is organized as follows. Section~\ref{sec-basics} recalls the
necessary notations and definitions; the lower bound is proved in
Section~\ref{sec-lower-bound} while the upper bound is proved in
Section~\ref{sec-upper-bound}. An appendix lists the exact values of $C_k(n)$
for small $n$ and $k$ that we managed to compute.

\section{Basics}
\label{sec-basics}

We consider words $x,y,w,\ldots$ over a finite $k$-letter alphabet
$A_k=\{\tta_1,\ldots,\tta_k\}$
sometimes written more simply
$A=\{\tta,\ttb,\ldots\}$.
The empty word is denoted $\epsilon$,
concatenation is denoted multiplicatively. Given a word $x\in A^*$ and
a letter $\tta\in A$, we write $\size{x}$ and $\size{x}_\tta$ for,
respectively, the length of $x$, and the number of occurrences of
$\tta$ in $x$.

We write $x\subword y$ to denote that a word $x$ is a \emph{subsequence} of
$y$, also called a (scattered) \emph{subword}. Formally, $x\subword y$ iff
$x=x_1\cdots x_\ell$ and there are words $y_0,y_1,\ldots,y_\ell$ such
that $y=y_0  x_1  y_1\cdots x_\ell  y_\ell$. It is well-known that
$\subword$ is a partial ordering and a monoid precongruence.

For any $n\in\Nat$, we write $x\sim_n y$ when $x$ and $y$ have the
same subwords of length $\leq n$. For example
$x\egdef\tta\ttb\tta\ttc\ttb \sim_2
y\egdef\ttb\tta\tta\tta\ttc\ttb\ttb$ since both words have
$\{\epsilon,\tta,\ttb,\ttc,\tta\tta,\tta\ttb,\tta\ttc,\ttb\tta,\ttb\ttb,\ttb\ttc,\ttc\ttb\}$
as subwords of length $\leq 2$. However $x\not\sim_3 y$ since
$x\supword\tta\ttb\tta\not\subword y$. Note that $\sim_0 \,\supseteq\,
\sim_1 \,\supseteq\, \sim_2 \,\supseteq\, \cdots$, and that $x\sim_0 y$ holds
trivially. It is well-known (and easy to see) that each $\sim_n$ is a
congruence since the subwords of some $x y$ are the concatenations of
a subword of $x$ and a subword of $y$. Simon defined a \emph{piecewise
  testable} language as any $L\subseteq A^*$ that is closed by
$\sim_n$ for some $n$~\cite{simon75}. These are exactly the languages
definable by $\Bcal\Sigma_1(<,\tta,\ttb,\ldots)$ formulae~\cite{DGK-ijfcs08}, i.e., by
Boolean combinations of existential first-order formulae with monadic
predicates of the form $\tta(i)$, stating that the $i$-th letter of a word is
$\tta$. For example, $L=A^* \tta A^* \ttb A^*=\{x\in A^*~|~
\tta\ttb\subword x\}$ is definable with the following $\Sigma_1$
formula:
\[
\exists i:\exists j:i<j\land \tta(i)\land\ttb(j)
\:.
\]

\paragraph{The index of $\sim_n$}
Since there are only finitely many words of length $\leq n$, the
congruence $\sim_n$ partitions $A_k^*$ in finitely many classes, and
we write $C_k(n)$ for the number of such classes, i.e., the cardinal
of $A_k^*\,/\!\sim_n$.

The following is easy to see:
\begin{xalignat}{3}
\label{eq-easy}
C_1(n) & = n+1\:,
&
C_k(0) & = 1\:,
&
C_k(1) & = 2^k\:.
\end{xalignat}
Indeed, for words over a single letter $\tta$, $x\sim_n y$ iff
$\size{x}=\size{y}< n$ or $\size{x}\geq n\leq\size{y}$, hence the
first equality. The second equality restates that $\sim_0$ is trivial,
as noted above. For the third equality, one notes that $x\sim_1 y$
if, and only if, the same set of letters is occurring in $x$ and $y$,
and that there are $2^k$ such sets of occurring letters.


\section{Lower bound}
\label{sec-lower-bound}

The first half of Theorem~\ref{theo-summary} is proved by first
establishing a combinatorial inequality on the $C_k(n)$'s
(Proposition~\ref{prop-Ckn-geq}) and then using it to derive
Proposition~\ref{prop-lower-bound}.
\\

Consider two words $x,y\in A^*$ and a letter $a\in A$.
\begin{lemma}
\label{lem-simn-nbofas}
If $x \sim_n y$, then $\min (\size{x}_a,n) =  \min
(\size{y}_a,n)$.
\end{lemma}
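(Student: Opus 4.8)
The plan is to prove the contrapositive-flavored statement directly: show that the word $a^{\min(\size{x}_a,\,n)}$ is a subword of $x$ of length $\le n$, and that conversely any power $a^m$ with $m \le n$ that is a subword of $x$ must satisfy $m \le \size{x}_a$. This immediately gives that the largest power of $a$ of length at most $n$ that sits below $x$ in the subword ordering is exactly $a^{\min(\size{x}_a,n)}$. Since $x \sim_n y$ means $x$ and $y$ have the same subwords of length at most $n$, they agree in particular on which powers of $a$ of length $\le n$ they contain, so the maximal such power is the same for both, yielding $\min(\size{x}_a,n) = \min(\size{y}_a,n)$.

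First I would record the elementary fact that for any word $x$ and letter $a$, one has $a^m \subword x$ if and only if $m \le \size{x}_a$: the ``if'' direction follows by picking $m$ of the $\size{x}_a$ occurrences of $a$ in $x$ (deleting everything else gives $a^{\size{x}_a} \supword a^m$), and the ``only if'' direction holds because a subword cannot contain more copies of $a$ than $x$ does, as $\subword$ does not increase $\size{\cdot}_a$. Next I would set $m_x = \min(\size{x}_a, n)$ and observe that $a^{m_x} \subword x$ (since $m_x \le \size{x}_a$) and $\size{a^{m_x}} = m_x \le n$, so $a^{m_x}$ is among the length-$\le n$ subwords of $x$; moreover no longer power $a^{m_x+1}$ with $m_x + 1 \le n$ is a subword of $x$, because that would force $m_x + 1 \le \size{x}_a$ while also $m_x + 1 \le n$, contradicting $m_x = \min(\size{x}_a,n)$. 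Hence $m_x$ is characterized purely in terms of the set of length-$\le n$ subwords of $x$.

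The conclusion is then immediate: by the same reasoning $m_y = \min(\size{y}_a,n)$ is the largest $m \le n$ with $a^m \subword y$, and since $x \sim_n y$ the two words possess exactly the same subwords of length at most $n$, so these maximal powers coincide, i.e. $m_x = m_y$. I do not expect any genuine obstacle here; the only point requiring a line of care is the ``only if'' half of the basic equivalence $a^m \subword x \Leftrightarrow m \le \size{x}_a$, which rests on the already-noted fact that $\subword$ is a monoid precongruence and in particular that $x \subword y$ implies $\size{x}_a \le \size{y}_a$. Everything else is bookkeeping with the definition of $\sim_n$.
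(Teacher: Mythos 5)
Your proof is correct and follows essentially the same route as the paper's sketch: both arguments reduce the claim to determining which powers $a^m$ with $m\leq n$ occur as subwords, which is invariant under $\sim_n$. Your version merely spells out the characterization $a^m\subword x\Leftrightarrow m\leq\size{x}_a$ and the resulting description of $\min(\size{x}_a,n)$ as the maximal such power in more detail.
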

\begin{proof}[Sketch]
If $\size{x}_a=p<n$ then $a^p\subword x\not\supword a^{p+1}$. From
$x\sim_n y$ we deduce $a^p\subword y\not\supword a^{p+1}$, hence
$\size{y}_a=p$.
\qed
\end{proof}
Fix now $k\geq 2$, let $A=A_k=\{\tta_1,\ldots,\tta_k\}$ and assume
$x\sim_n y$. If $\size{x}_{\tta_k}=p<n$, then $x$ is some $x_0
\tta_k x_1 \cdots \tta_k x_p$ with $x_i\in A^*_{k-1}$ for
$i=0,\ldots,p$. By
Lemma~\ref{lem-simn-nbofas}, $y$ too is some $y_0 \tta_k
y_1 \cdots \tta_k y_p$ with $y_i\in A^*_{k-1}$.
\begin{lemma}
\label{lem-simn-p-xiyi}
$x_i \sim_{n-p} y_i$ for all $i=0,\ldots,p$.
\end{lemma}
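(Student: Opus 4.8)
The plan is to verify the two inclusions defining $\sim_{n-p}$ separately; by the symmetry between $x$ and $y$ it suffices to show that every $w\in A_{k-1}^*$ with $\size{w}\leq n-p$ and $w\subword x_i$ also satisfies $w\subword y_i$. The idea is to ``pad'' $w$ with the letter $\tta_k$ so as to lock it onto the $i$-th block, exploiting the hypothesis $x\sim_n y$ and the fact, guaranteed by Lemma~\ref{lem-simn-nbofas}, that $x$ and $y$ contain exactly the same number $p$ of occurrences of $\tta_k$.

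Concretely, set $u\egdef\tta_k^{\,i}\,w\,\tta_k^{\,p-i}$. Since $w$ is over $A_{k-1}$ it contains no $\tta_k$, so $\size{u}_{\tta_k}=p$ and $\size{u}=p+\size{w}\leq n$. First I would check that $u\subword x$: using $x=x_0\tta_k x_1\cdots\tta_k x_p$, embed the initial block $\tta_k^{\,i}$ into the first $i$ separating occurrences of $\tta_k$, which leaves the cursor at the start of the factor $x_i$; then embed $w$ inside $x_i$ (possible since $w\subword x_i$); then embed the final block $\tta_k^{\,p-i}$ into the remaining $p-i$ occurrences of $\tta_k$, which all lie to the right of $x_i$. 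Hence $u\subword x$, and from $x\sim_n y$ together with $\size{u}\leq n$ we obtain $u\subword y$.

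It remains to extract $w\subword y_i$ from $u\subword y$. Here is the crux: $u$ has exactly $p$ occurrences of $\tta_k$ and, by Lemma~\ref{lem-simn-nbofas}, so does $y=y_0\tta_k y_1\cdots\tta_k y_p$; since a subword embedding maps occurrences of $\tta_k$ injectively and in an order-preserving way, this forces the $j$-th occurrence of $\tta_k$ in $u$ to be matched with the $j$-th occurrence of $\tta_k$ in $y$, for each $j=1,\ldots,p$. Consequently the factor $w$ of $u$, which sits strictly between the $i$-th and $(i+1)$-th occurrences of $\tta_k$ in $u$, is embedded into the portion of $y$ strictly between its $i$-th and $(i+1)$-th occurrences of $\tta_k$, which is exactly $y_i$. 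Thus $w\subword y_i$, and swapping the roles of $x$ and $y$ gives the reverse inclusion, so $x_i\sim_{n-p} y_i$ for every $i$.

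I expect the only delicate point to be this last step: making precise that the forced, order-preserving pairing of the $p$ occurrences of $\tta_k$ in $u$ with those of $y$ confines $w$ to the block $y_i$. The padding construction and the verification $u\subword x$ are routine, and the boundary cases $i=0$, $i=p$, and $p=0$ are absorbed uniformly by reading empty blocks $\tta_k^{\,0}=\epsilon$ correctly.
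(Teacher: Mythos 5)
Your proof is correct and follows essentially the same route as the paper: pad $w$ to $\tta_k^{\,i}w\,\tta_k^{\,p-i}$, transfer it through $x\sim_n y$, and use the fact that $y$ has exactly $p$ occurrences of $\tta_k$ to pin the embedding of $w$ inside $y_i$. You merely spell out in more detail the final step that the paper leaves implicit (the forced order-preserving matching of the $p$ occurrences of $\tta_k$), which is a welcome clarification rather than a divergence.
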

\begin{proof}
Suppose $w\subword x_i$ and $\size{w}\leq n-p$. Let $w'\egdef \tta_k^i
w \tta_k^{p-i}$. Clearly $w'\subword x$ and thus $w'\subword y$ since
$x\sim_n y$ and $\size{w'}\leq n$. Now $w'=\tta_k^i w
\tta_k^{p-i}\subword y$ entails $w\subword y_i$.

With a symmetric reasoning we show that every subword of $y_i$ having
length $\leq n-p$ is a subword of $x_i$ and we conclude $x_i\sim_{n-p}
y_i$.
\qed
\end{proof}

\begin{proposition}
\label{prop-Ckn-geq}
For $k\geq 2$, $C_{k}(n) \geq \sum_{p=0}^n C_{k-1}^{p+1}(n-p)$.
\end{proposition}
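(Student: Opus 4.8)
The plan is to establish the inequality by exhibiting an injection from a disjoint union of Cartesian products of $\sim_{n-p}$-classes over $A_{k-1}$ into the set of $\sim_n$-classes over $A_k$. Concretely, for each $p$ with $0\leq p\leq n$, and each tuple $(x_0,x_1,\ldots,x_p)$ of words over $A_{k-1}$, form the word $\phi_p(x_0,\ldots,x_p) \egdef x_0\,\tta_k\,x_1\,\tta_k\cdots\tta_k\,x_p$ over $A_k$. This word has exactly $p$ occurrences of $\tta_k$. The map $\phi_p$ sends $(p+1)$-tuples of $\sim_{n-p}$-classes to $\sim_n$-classes: by Lemma~\ref{lem-simn-p-xiyi}, if $x_i\sim_{n-p}y_i$ for all $i$ then the full words are $\sim_n$-equivalent (the lemma as stated goes the other way, but its proof only uses the fact that the forward word-formation preserves $\sim_n$; alternatively one proves directly that $\phi_p$ respects the congruences on each coordinate). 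So $\phi_p$ descends to a well-defined map $\overline{\phi_p}\colon \bigl(A_{k-1}^*/\!\sim_{n-p}\bigr)^{p+1} \to A_k^*/\!\sim_n$.

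The key step is injectivity of the combined map. First, the ranges of $\overline{\phi_p}$ for distinct values of $p\leq n$ are pairwise disjoint: every word in the image of $\phi_p$ has exactly $p$ occurrences of $\tta_k$, and by Lemma~\ref{lem-simn-nbofas}, since $p<n+1$, i.e. $p\leq n$ means $\min(p,n)=p$ when $p<n$ and the boundary case $p=n$ must be handled by noting $\min(p,n)=p$ still — more carefully, two words with different numbers of $\tta_k$'s, both numbers being $\leq n$, cannot be $\sim_n$-equivalent; for $p=n$ we only claim $\min(|x|_{\tta_k},n)=n$, so the class with exactly $n$ occurrences is still distinguished from all classes with strictly fewer, which is all we need for the disjointness of consecutive ranges since the sum runs $p=0,\ldots,n$. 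Second, $\overline{\phi_p}$ is injective for fixed $p$: if $\phi_p(x_0,\ldots,x_p)\sim_n\phi_p(y_0,\ldots,y_p)$, then both words have $p$ occurrences of $\tta_k$ (so the blockwise decomposition around the $\tta_k$'s is forced and canonical), and Lemma~\ref{lem-simn-p-xiyi} gives $x_i\sim_{n-p}y_i$ for every $i$, i.e. the two tuples of classes coincide.

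Combining, the total number of $\sim_n$-classes over $A_k$ is at least the sum over $p=0,\ldots,n$ of the sizes of the domains, namely $\sum_{p=0}^{n}\bigl(C_{k-1}(n-p)\bigr)^{p+1}$, which is the claimed bound.

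The main obstacle is the bookkeeping at the boundary $p=n$: Lemma~\ref{lem-simn-nbofas} only pins down $|x|_{\tta_k}$ exactly when it is $<n$, so one must check that the class with exactly $n$ occurrences of $\tta_k$ — whose domain contributes $C_{k-1}(0)^{n+1}=1$ — is genuinely new and that $\overline{\phi_n}$ is injective on it; here $n-p=0$, every $x_i$ is $\sim_0$-equivalent to $\epsilon$, and the class is simply that of $\tta_k^{\,n}$, which indeed is not $\sim_n$-equivalent to any word with fewer $\tta_k$'s (it contains $\tta_k^n$ as a subword, they do not), so the argument closes. The rest is routine verification that $\phi_p$ is compatible with the congruences coordinatewise, which is exactly the content of (the proof of) Lemma~\ref{lem-simn-p-xiyi}.
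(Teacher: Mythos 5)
Your proposal follows essentially the same route as the paper: stratify words by their number $p$ of occurrences of $\tta_k$, use Lemma~\ref{lem-simn-nbofas} to keep the strata apart, use Lemma~\ref{lem-simn-p-xiyi} to show that distinct tuples of $\sim_{n-p}$-classes yield $\not\sim_n$ words, and add one extra class for $p\geq n$. The counting and the boundary case $p=n$ are handled correctly.

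There is, however, one false intermediate claim: that $\overline{\phi_p}$ is well defined on classes, i.e.\ that $x_i\sim_{n-p}y_i$ for all $i$ implies $\phi_p(x_0,\ldots,x_p)\sim_n\phi_p(y_0,\ldots,y_p)$. This converse of Lemma~\ref{lem-simn-p-xiyi} does not hold, and the lemma's proof does not give it. Take $k=3$, $n=2$, $p=1$, $x_0=\tta\ttb$, $y_0=\ttb\tta$, $x_1=y_1=\epsilon$, $\tta_k=\ttc$: then $x_0\sim_1 y_0$ and $x_1\sim_1 y_1$, yet $\tta\ttb\ttc\not\sim_2\ttb\tta\ttc$, since only the first word has $\tta\ttb$ as a subword. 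The implication fails because a length-$n$ subword of $x_0\tta_k\cdots\tta_k x_p$ need not use all $p$ separators, so its trace inside a single block $x_i$ can be longer than $n-p$. Fortunately the lower bound does not need well-definedness: define the map on a fixed set of representatives of the tuples of classes instead of on the quotient; injectivity at the level of classes --- which is exactly what Lemma~\ref{lem-simn-p-xiyi} provides and what you argue --- already yields at least $C_{k-1}^{p+1}(n-p)$ pairwise distinct $\sim_n$-classes in each stratum. With that repair your proof is correct and coincides with the paper's argument.
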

\begin{proof}
For words $x = x_0 \tta_k x_1 \ldots x_{p-1} \tta_k x_p$ with exactly
$p<n$ occurrences of $\tta_k$, we have $C_{k-1}(n-p)$ possible choices
of $\sim_{n-p}$ equivalence classes for each $x_i$ ($i=0,\ldots,p$).
By Lemma~\ref{lem-simn-p-xiyi} all such choices will result in
$\not\sim_n$ words, hence there are exactly $C^{p+1}_{k-1}(n-p)$
classes of words with $p<n$ occurrences of $\tta_k$. By
Lemma~\ref{lem-simn-nbofas}, these classes are disjoint for different
values of $p$, hence we can add the $C_{k-1}^{p+1}(n-p)$'s.
There remain words with $p\geq n$ occurrences of $\tta_k$, accounting
for at least $1$, i.e., $C_{k-1}^{n+1}(0)$, additional class.
\qed
\end{proof}

\begin{proposition}\label{prp:lower}\label{prop-lower-bound}
  For all $k,n>0$:
  \begin{equation}\label{eq-lower-bound}
    \log_2 C_k(n) > \left(\frac{n}{k}\right)^{k-1} \log_2 \left(\frac{n}{k}\right).
  \end{equation}
\end{proposition}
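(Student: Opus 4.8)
The plan is to derive \eqref{eq-lower-bound} from Proposition~\ref{prop-Ckn-geq} by induction on $k$ (the statement being ``for every $n>0$, \ldots''). For the base case $k=1$ the right-hand side of \eqref{eq-lower-bound} is $n^{0}\log_2 n=\log_2 n$, so, using $C_1(n)=n+1$ from \eqref{eq-easy}, the claim reduces to $\log_2(n+1)>\log_2 n$, which is immediate.

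For the inductive step fix $k\geq 2$ and assume \eqref{eq-lower-bound} for $k-1$ and every $m>0$. If $n<k$ then $\log_2(n/k)<0$, so the right-hand side of \eqref{eq-lower-bound} is negative while $\log_2 C_k(n)\geq 1>0$ (there are at least two $\sim_n$-classes, e.g.\ that of $\epsilon$ and that of a single letter), and we are done; so assume $n\geq k$. The idea is to drop all but one summand in Proposition~\ref{prop-Ckn-geq}: every summand is at least $1$, so for any index $p\in\{0,\dots,n\}$ we get $C_k(n)\geq C_{k-1}^{p+1}(n-p)$, hence $\log_2 C_k(n)\geq (p+1)\log_2 C_{k-1}(n-p)$. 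The crux is the choice $p=\ceil{n/k}-1$: it satisfies $0\leq p<n$ (a legal index), forces $n-p\geq 1$ (so the induction hypothesis applies to $C_{k-1}(n-p)$), and is tuned so that $p+1=\ceil{n/k}\geq n/k$ while $n-p\geq n-n/k=n(k-1)/k$, whence $\tfrac{n-p}{k-1}\geq\tfrac{n}{k}\geq 1$. Feeding the induction hypothesis into the inequality above and then using these two estimates together with monotonicity of $x\mapsto x^{k-2}$ and $x\mapsto\log_2 x$ on $[1,\infty)$ yields
\begin{align*}
\log_2 C_k(n) &\geq (p+1)\log_2 C_{k-1}(n-p)\\
&> (p+1)\Bigl(\tfrac{n-p}{k-1}\Bigr)^{k-2}\log_2\Bigl(\tfrac{n-p}{k-1}\Bigr)\\
&\geq \tfrac{n}{k}\Bigl(\tfrac{n}{k}\Bigr)^{k-2}\log_2\Bigl(\tfrac{n}{k}\Bigr)\\
&= \Bigl(\tfrac{n}{k}\Bigr)^{k-1}\log_2\Bigl(\tfrac{n}{k}\Bigr),
\end{align*}
which is exactly the desired bound.

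The floor/ceiling arithmetic behind $p$ and the inequalities $0\le p<n$, $n-p\ge 1$, $\tfrac{n-p}{k-1}\ge\tfrac{n}{k}$ are routine. The main obstacle — and the one place the argument needs care — is the final chain: to pass from the induction hypothesis to the bound in terms of $n/k$ one needs all three factors $p+1$, $\bigl(\tfrac{n-p}{k-1}\bigr)^{k-2}$ and $\log_2\bigl(\tfrac{n-p}{k-1}\bigr)$ to be nonnegative, which is exactly why the reduction $\tfrac{n-p}{k-1}\ge\tfrac{n}{k}\ge 1$ (and hence the case split at $n\ge k$) is essential; one should also confirm that it is the strict induction hypothesis that keeps the overall inequality strict, and that the degenerate exponent $k-2=0$ at $k=2$ causes no trouble.
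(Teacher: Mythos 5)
Your proof is correct and follows essentially the same route as the paper: induction on $k$, dispatching the case $n\le k$ (resp.\ $n<k$) trivially, then keeping a single summand $C_{k-1}^{p+1}(n-p)$ from Proposition~\ref{prop-Ckn-geq} with $p\approx n/k$ so that $p+1\ge n/k$ and $\frac{n-p}{k-1}\ge\frac{n}{k}\ge 1$. The only (immaterial) difference is your choice $p=\ceil{n/k}-1$ where the paper takes the floor.
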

\begin{proof}
Eq.~\eqref{eq-lower-bound} holds trivially when 
$\log_2(\frac{n}{k})\leq 0$. Hence there only remains to consider the
cases where $n>k$. We reason by induction on $k$.
For $k=1$, Eq.~\eqref{eq-easy} gives $\log_2 C_1(n) = \log_2 (n + 1) > \log_2 n =
\left(\frac{n}{1}\right)^{0} \log_2 \left(\frac{n}{1}\right)$.
For the inductive case, Proposition~\ref{prop-Ckn-geq} yields $C_{k+1}(n) \geq
C_{k}^{p+1}(n-p)$ for all $p \in \oneset{0,\ldots,n}$. For $p = \floor{\frac{n}{k+1}}$ this yields
\begin{align*}
  \log_2 C_{k+1}(n) &\geq (p+1) \log_2 C_k(n-p)
  \\ &> (p+1) \left( \frac{n-p}{k} \right)^{k-1} \log_2 \left( \frac{n-p}{k} \right)
\\
\shortintertext{by ind.\ hyp., noting that $n-p>0$,}
 &\geq \frac{n}{k+1} \left( \frac{n}{k+1} \right)^{k-1} \log_2 \left( \frac{n}{k+1} \right)
  \\
\shortintertext{since $\frac{n-p}{k} \geq \frac{n}{k+1}\geq 1$,}
 &= \left( \frac{n}{k+1} \right)^{k} \log_2 \left( \frac{n}{k+1} \right)
\end{align*}
as desired.
\qed
\end{proof}


\section{Upper bound}
\label{sec-upper-bound}

The second half of Theorem~\ref{theo-summary} is again by
establishing a combinatorial inequality on the $C_k(n)$'s
(Proposition~\ref{prop-Ckn-comb-ineq})
and then using it to derive Proposition~\ref{prop-upper-bound}.
\\

Fix $k>0$ and consider words in $A_k^*$. We say that a word $x$ is
\emph{rich} if all the $k$ letters of $A_k$ occur in it, and that it
is \emph{poor} otherwise. For $\ell>0$, we further say that $x$
is $\ell$-rich if it can be written as a concatenation of  $\ell$ rich
factors (by extension  ``$x$ is $0$-rich'' means that $x$ is poor).
The \emph{richness} of $x$ is the largest $\ell\in\Nat$ such that $x$ is
$\ell$-rich. Note that $\forall a\in A_k:\size{x}_a\geq \ell$ does not
imply that $x$ is $\ell$-rich. We shall use the following easy result:
\begin{lemma}
\label{lem-rich-and-sim}
If $x_1$ and $x_2$ are respectively $\ell_1$-rich and $\ell_2$-rich,
then $y\sim_n y'$ implies $x_1 y x_2\sim_{\ell_1+n+\ell_2}x_1 y' x_2$.
\end{lemma}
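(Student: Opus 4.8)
The plan is to prove the two subword inclusions separately; by the symmetry between $y$ and $y'$ it suffices to show that every $w$ with $w\subword x_1 y x_2$ and $\size{w}\leq \ell_1+n+\ell_2$ is also a subword of $x_1 y' x_2$. The one extra ingredient that makes the weight $\ell_1+n+\ell_2$ work is the elementary remark that an $\ell$-rich word contains \emph{every} word of length $\leq\ell$ as a subword: if $x_1=f_1\cdots f_{\ell_1}$ with each $f_i$ rich, then any $b_1\cdots b_m$ with $m\leq\ell_1$ embeds letter by letter via $b_1\subword f_1,\ldots,b_m\subword f_m$. I would record this (and the dual statement for $x_2$) first.

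Given such a $w$, I would factor it greedily: let $p$ be the longest prefix of $w$ with $p\subword x_1$, write $w=pz$, then let $s$ be the longest suffix of $z$ with $s\subword x_2$, write $z=ms$, so that $w=pms$. Two things need checking. First, $m\subword y$: take any embedding of $w$ into $x_1 y x_2$, which cuts $w$ as $\alpha\beta\gamma$ with $\alpha\subword x_1$, $\beta\subword y$, $\gamma\subword x_2$; maximality of $p$ forces $\size{\alpha}\leq\size{p}$, so $z$ is a suffix of $\beta\gamma$, hence $z\subword y x_2$ (using that $\subword$ is a monoid precongruence and that factors of subwords are subwords); running the same argument on $z$ against $x_2$ shows that $m$ is a prefix of a subword of $y$, i.e.\ $m\subword y$.

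The crux is the length bound $\size{m}\leq n$, and this is where I would play maximality of $p$ and $s$ against the richness remark. If $p\neq w$, let $b$ be the first letter of $z$; then $pb$ is a prefix of $w$, and were $\size{p}<\ell_1$ we would have $\size{pb}\leq\ell_1$, hence $pb\subword x_1$, contradicting maximality of $p$; therefore $\size{p}\geq\ell_1$. Symmetrically, $s\neq z$ implies $\size{s}\geq\ell_2$. The degenerate cases $p=w$ and $s=z$ both force $m=\epsilon$, so $\size{m}=0\leq n$; in the remaining case $\size{m}=\size{w}-\size{p}-\size{s}\leq(\ell_1+n+\ell_2)-\ell_1-\ell_2=n$. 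So in every case $m\subword y$ and $\size{m}\leq n$, whence $m\subword y'$ because $y\sim_n y'$, and therefore $w=pms\subword x_1 y' x_2$.

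I expect the main obstacle to be organisational rather than conceptual: one must keep the greedy factorisation of $w$ cleanly separated from the \emph{arbitrary} embedding that is only used to certify $m\subword y$, and handle the corner cases ($\ell_1=0$ or $\ell_2=0$, and $m=\epsilon$) so that the single inequality $\size{m}\leq n$ subsumes them all. Everything else is a routine use of the fact that prefixes, suffixes and factors of subwords are subwords, plus the precongruence property of $\subword$.
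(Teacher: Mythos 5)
Your proof is correct and follows essentially the same route as the paper's: the same greedy decomposition $w=pms$ with $p$ the longest prefix embeddable in $x_1$ and $s$ the longest suffix of the remainder embeddable in $x_2$, the same use of richness to force $\size{p}\geq\ell_1$ and $\size{s}\geq\ell_2$, and the same conclusion via $m\subword y$, $\size{m}\leq n$, hence $m\subword y'$. You merely fill in details the paper leaves as a sketch (the certification of $m\subword y$ from an arbitrary embedding, and the degenerate cases).
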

\begin{proof}
A subword $u$ of $x_1 y x_2$ can be decomposed as $u=u_1 v u_2$ where
$u_1$ is the largest prefix of $u$ that is a subword of $x$ and $u_2$
is the largest suffix of the remaining $u_1^{-1} u$ that is a subword
of $x_2$. Thus $v\subword y$ since $u\subword x_1 y x_2$. Now, since
$x_1$ is $\ell_1$-rich, $\size{u_1}\geq \ell_1$ (unless $u$ is too
short), and similarly $\size{u_2}\geq \ell_2$ (unless \ldots). Finally
$\size{v}\leq n$ when $\size{u}\leq \ell_1+n+\ell_2$, and then
$v\subword y'$ since $y\sim_n y'$, entailing $u\subword x_1 y' x_2$. A
symmetrical reasoning shows that subwords of $x_1y'x_2$ of length $\leq
\ell_1+n+\ell_2$ are subwords of $x_1yx_2$ and we are done.
\qed
\end{proof}

The \emph{rich factorization} of $x\in A_k^*$ is the decomposition
$x=x_1 a_1 \cdots x_m a_m y$
 obtained in the following way: if $x$ is poor, we
let $m=0$ and $y=x$; otherwise $x$ is rich, we let $x_1 a_1$ (with
$a_1\in A_k$) be the
shortest  prefix of $x$ that is rich, write $x=x_1 a_1 x'$ and let
$x_2 a_2\ldots x_m a_m
y$ be the rich factorization of the remaining suffix $x'$.
By construction $m$ is the {richness} of $x$.
E.g.,
assuming $k=3$, the following is a rich factorization with $m=2$:
\[
\obracew{\ttb\ttb\tta\tta\tta\ttb\ttb\ttc\ttc\ttc\ttc\tta\tta\ttb\ttb\ttb\tta\tta}{x}
=\obracew{\ttb\ttb\tta\tta\tta\ttb\ttb}{x_1}\cdot\ttc\cdot
\obracew{\ttc\ttc\ttc\tta\tta}{x_2}\cdot\ttb \cdot\obracew{\ttb\ttb\tta\tta}{y}
\]
Note that, by definition, $x_1,\ldots,x_m$ and $y$ are poor.

\begin{lemma}
\label{lem-rich-middle}

Consider two words $x,x'$ of richness $m$ and with rich factorizations $x=x_1 a_1 \ldots x_m a_m y$ and $x'=x'_1 a_1
\ldots x'_m a_m y'$. Suppose that  $y \sim_n y'$ and that
$x_i \sim_{n+1} x'_i$  for all $i=1,\ldots,m$. Then $x \sim_{n+m} x'$.
\end{lemma}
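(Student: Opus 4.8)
The plan is to prove the claim by induction on $m$, peeling off one rich block at a time from the left and applying Lemma~\ref{lem-rich-and-sim} as the workhorse. For the base case $m=0$ the two words are just $x=y$ and $x'=y'$, so $y\sim_n y'$ is exactly $x\sim_{n+0} x'$ and there is nothing to do. For the inductive step, suppose the statement holds for richness $m-1$. Given $x=x_1 a_1\,x_2 a_2\cdots x_m a_m\,y$ and $x'=x'_1 a_1\,x_2' a_2\cdots x'_m a_m\,y'$ with $x_i\sim_{n+1}x'_i$ for all $i$ and $y\sim_n y'$, write $z\egdef x_2 a_2\cdots x_m a_m\,y$ and $z'\egdef x'_2 a_2\cdots x'_m a_m\,y'$. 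These are words of richness $m-1$ with the displayed rich factorizations, and they satisfy the induction hypotheses ($x_i\sim_{n+1}x'_i$ for $i=2,\dots,m$ and $y\sim_n y'$), so by induction $z\sim_{n+m-1}z'$.

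Now I want to glue the block $x_1 a_1$ (resp.\ $x'_1 a_1$) back on the left. Here $x_1$ is poor, hence $0$-rich, and the single letter $a_1$ is $0$-rich as well (it is not rich since $k>1$ in the intended application, but $0$-richness — being poor — is all I need; if $k=1$ the lemma is degenerate anyway). I then want to apply Lemma~\ref{lem-rich-and-sim} twice, or rather once with a two-step bookkeeping: since $x_1\sim_{n+1}x'_1$ and $z\sim_{n+m-1}z'$, and $a_1$ is a fixed common factor, I first use Lemma~\ref{lem-rich-and-sim} with the middle word $z$ against $z'$, surrounded by the $0$-rich prefix $x_1 a_1$ on the left and the empty ($0$-rich) word on the right, to get $x_1 a_1 z\sim_{0+(n+m-1)+0}x_1 a_1 z'$, i.e., $x_1 a_1 z\sim_{n+m-1}x_1 a_1 z'$. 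Then I swap $x_1$ for $x'_1$: again by Lemma~\ref{lem-rich-and-sim}, now with middle word $x_1$ against $x'_1$ (related by $\sim_{n+1}$), empty $0$-rich prefix, and the $0$-rich suffix $a_1 z'$, I get $x_1 a_1 z'\sim_{0+(n+1)+0}x'_1 a_1 z'$, i.e., $x_1 a_1 z'\sim_{n+1}x'_1 a_1 z'$. Chaining the two with transitivity and the monotonicity $\sim_{j}\supseteq\sim_{j'}$ for $j\le j'$ (so both relations hold at the weaker level $\min(n+m-1,n+1)$), I conclude $x_1 a_1 z\sim_{\min(n+m-1,\,n+1)}x'_1 a_1 z'$, which is $x\sim_{\min(n+m-1,n+1)}x'$.

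The main obstacle is the arithmetic of the index: I need $\min(n+m-1,n+1)\ge n+m$, which is false in general (it fails already for $m\ge 2$). So the naive two-step gluing loses too much. The fix is to not re-derive the index for $z$ from the induction hypothesis at level $n+m-1$ and then degrade it, but instead to route the richness count of $x_1 a_1$ (which is $1$) through Lemma~\ref{lem-rich-and-sim} properly: the correct induction hypothesis to carry is phrased so that after gluing an $\ell$-rich prefix one gains $\ell$ in the index. Concretely, I should instead do the single application of Lemma~\ref{lem-rich-and-sim} with $x_1 a_1$ playing the role of the left context $x_1$ in that lemma — but $x_1 a_1$ is $1$-rich, not $0$-rich, since $x_1 a_1$ is by construction the shortest rich prefix of $x$. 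Thus $x_1 a_1$ is $1$-rich, and Lemma~\ref{lem-rich-and-sim} applied to the middle $z\sim_{n+m-1}z'$ with $1$-rich left context $x_1 a_1$ and $0$-rich (empty) right context gives $x_1 a_1 z\sim_{1+(n+m-1)+0}x_1 a_1 z' = \sim_{n+m}$. Then the letter-swap step $x_1\leftrightarrow x'_1$ must also be done at level $n+m$: here I apply Lemma~\ref{lem-rich-and-sim} to the middle $x_1\sim_{n+1}x'_1$ with empty left context and right context $a_1 z'$, which I claim is $(m-1)$-rich (it contains $a_2\cdots a_m$ and the rich blocks hidden in $z'$ — more carefully, $a_1 z'$ has richness at least $m-1$ because $z' = x'_2 a_2\cdots x'_m a_m y'$ has richness $m-1$). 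This yields $x_1 a_1 z'\sim_{0+(n+1)+(m-1)}x'_1 a_1 z' = \sim_{n+m}$. Transitivity then gives $x\sim_{n+m}x'$ as required. So the real content is a careful accounting of how much richness sits to the left and to the right of each swapped block, and verifying that in each of the two swaps the surrounding richness plus the local index meets or exceeds $n+m$; everything else is routine.
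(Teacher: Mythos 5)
Your final argument is correct and essentially reproduces the paper's proof: both reduce to repeated applications of Lemma~\ref{lem-rich-and-sim}, swapping one poor factor at a time and counting the richness of the surrounding context to reach index $n+m$; you organize this as an induction on $m$ where the paper writes it as a direct chain of $m+1$ rewrites at level $n+m$. Only your corrected bookkeeping (using that $x_1a_1$ is $1$-rich and that $a_1z'$ is $(m-1)$-rich) is valid --- the naive two-step gluing you describe first does indeed lose too much, as you yourself observed.
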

\begin{proof}
By repeatedly using Lemma~\ref{lem-rich-and-sim}, one shows
\begin{align*}
x_1a_1x_2a_2\ldots x_ma_m y
\;\sim_{n+m}\; & x'_1a_1x_2a_2\ldots x_ma_m y
\\
\;\sim_{n+m}\; & x'_1a_1x'_2a_2\ldots x_ma_m y
\\
&\quad\quad\vdots
\\
\;\sim_{n+m}\; & x'_1a_1x'_2a_2\ldots x'_ma_m y
\\
\;\sim_{n+m}\; & x'_1a_1x'_2a_2\ldots x'_ma_m y'
\:,
\end{align*}
using the fact that each factor $x_i a_i$ is rich.
\qed
\end{proof}

\begin{proposition}
\label{prop-Ckn-comb-ineq}
For all $n\geq 0$ and $k\geq 2$,
\begin{align}
\notag
C_k(n) & \leq 1+\sum_{m=0}^{n-1} k^{m+1} \, C_{k-1}^{m}(n-m+1) \,
C_{k-1}(n-m)\:.
\\
\shortintertext{Furthermore, for $k=2$,}
\label{eq-C2n}
C_2(n) & \leq 2 \sum_{m=0}^{2n-1} n^m = 2\frac{n^{2n}-1}{n-1}
\:.
\end{align}
\end{proposition}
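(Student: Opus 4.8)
The plan is to count the $\sim_n$-classes by splitting $A_k^*$ according to the richness of a word and exploiting the rich factorization together with Lemma~\ref{lem-rich-middle}. First, all words of richness $\geq n$ lie in a single $\sim_n$-class: if $x=r_1\cdots r_n$ with each $r_i$ rich, then any word $b_1\cdots b_\ell$ with $\ell\leq n$ is a subword of $r_1\cdots r_\ell\subword x$, so every such $x$ has exactly $A_k^{\leq n}$ as its set of subwords of length $\leq n$. This accounts for the summand $1$.

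Now fix $m$ with $0\leq m\leq n-1$ and look at the words of richness exactly $m$, with rich factorization $x=x_1a_1\cdots x_ma_my$. To such an $x$ I attach the data consisting of $(a_1,\ldots,a_m)$, the $\sim_{n-m+1}$-classes of $x_1,\ldots,x_m$, and the $\sim_{n-m}$-class of $y$. If another word $x'$ of richness $m$ has the same attached data, then its rich factorization uses the same letters $a_i$ and satisfies $x'_i\sim_{n-m+1}x_i$ and $y'\sim_{n-m}y$, so Lemma~\ref{lem-rich-middle} (with $n-m$ in the role of its parameter $n$) gives $x\sim_{(n-m)+m}x'$, i.e.\ $x\sim_n x'$. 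Hence the number of $\sim_n$-classes meeting the richness-$m$ words is at most the number of such data. To count the latter: the letters give $k^m$ choices; each $x_i$ is poor, and since $x_ia_i$ is rich, $x_i$ must in fact be a word over the \emph{fixed} $(k-1)$-letter alphabet $A_k\setminus\{a_i\}$, which contributes at most $C_{k-1}(n-m+1)$ classes each; and $y$ is poor, hence a word over $A_k\setminus\{b\}$ for some $b\in A_k$, which contributes at most $k\,C_{k-1}(n-m)$ classes. Multiplying and summing over $m$ gives the first inequality (for $n=0$ the sum is empty and it reads $C_k(0)\leq1$). I expect the only real subtlety here to be this asymmetry between the $x_i$, which live over a determined sub-alphabet, and $y$, for which the missing letter is free: that is exactly what turns $k^m$ into $k^{m+1}$.

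For the $C_2(n)$ bound the first inequality is not enough (already at $n=2$, using $C_1(j)=j+1$, its right-hand side exceeds $2\sum_{m=0}^{2n-1}n^m$), so I would rerun the count for $k=2$, now using that each $x_i$ is a \emph{nonempty} power of one letter and $y$ is a (possibly empty) power of one letter. This yields at most $2^m$ choices for the letters, at most $n-m+1$ classes for each $x_i$, and at most $2(n-m)+1$ classes for the poor word $y$, hence
\[
C_2(n)\ \leq\ 1+\sum_{m=0}^{n-1}2^m\,(n-m+1)^m\,\bigl(2(n-m)+1\bigr).
\]
It then remains to bound the right-hand side by $2\sum_{m=0}^{2n-1}n^m$ (for $n\geq1$), which is elementary: the $m=0$ summand is $2n+1$, so with the leading $1$ it contributes $2n+2=2n^0+2n^1$; and for $1\leq m\leq n-1$ one has $2^m(n-m+1)^m\leq n^m\cdot n^m=n^{2m}$ (since $n-m+1\leq n$ and $2\leq n$) and $2(n-m)+1<2n$, so that summand is at most $2n^{2m+1}$. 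As $0,1,3,5,\ldots,2n-1$ are distinct members of $\{0,\ldots,2n-1\}$, summing gives $C_2(n)\leq 2\bigl(n^0+n^1+n^3+\cdots+n^{2n-1}\bigr)\leq 2\sum_{m=0}^{2n-1}n^m$, and the closed form is just the geometric-series identity. The delicate part here is purely the bookkeeping in the per-$m$ count (the richness-$\geq n$ class, and the exponents running from $1$ for the $x_i$ but from $0$ for $y$); the final estimate is then immediate.
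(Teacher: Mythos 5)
Your proof is correct. For the first inequality it is essentially the paper's argument: you classify words by richness, attach to each richness-$m$ word the tuple of letters $a_i$, the $\sim_{n-m+1}$-classes of the poor factors $x_i$ (over the forced alphabet $A_k\setminus\{a_i\}$) and the missing letter plus $\sim_{n-m}$-class of $y$, and invoke Lemma~\ref{lem-rich-middle}; you additionally spell out why all words of richness $\geq n$ form a single class, which the paper only asserts. For Eq.~\eqref{eq-C2n}, however, you take a genuinely different route. The paper drops the rich factorization and works directly with the decomposition of a binary word into $m$ alternating blocks $\tta^{\ell_1}\ttb^{\ell_2}\cdots$, asserting (without proof) that matching the first letter, the alternation depth, and the truncated exponents $\min(\ell_i,n)$ forces $\sim_n$-equivalence; this gives $2n^m$ classes per depth $1\leq m\leq 2n-1$ and the bound immediately. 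You instead rerun the rich-factorization count for $k=2$ with the sharper observations that each $x_i$ is a nonempty one-letter power and $y$ a possibly empty one, obtaining $1+\sum_{m=0}^{n-1}2^m(n-m+1)^m\bigl(2(n-m)+1\bigr)$, and then compare term by term with $2\sum_{m=0}^{2n-1}n^m$ via the distinct exponents $0,1,3,\ldots,2n-1$; your numerical checks (e.g.\ that the general inequality gives $31>30$ at $n=2$, so a separate argument really is needed) are right. Your route buys uniformity --- both inequalities rest on the same Lemma~\ref{lem-rich-middle}, with only elementary estimates added --- at the cost of extra bookkeeping, while the paper's block argument is shorter but relies on an unproved (though easy) equivalence criterion. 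Both arguments, like the statement itself, implicitly require $n\geq 1$ for the second inequality.
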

\begin{proof}
Consider two words $x,x'$ and their rich factorization $x=x_1 a_1
\ldots x_m a_m y$ and $x'=x'_1 a'_1 \ldots x'_\ell a'_\ell y'$. By
Lemma~\ref{lem-rich-middle} they belong to the same $\sim_n$ class if
$\ell=m$, $y\sim_{n-m}y'$, and $a_i=a'_i$ and $x_i\sim_{n-m+1}x'_i$
for all $i=1,\ldots,m$. Now for every fixed $m$, there are at most
$k^m$ choices for the $a_i$'s, $C_{k-1}^m(n-m+1)$ non-equivalent
choices for the $x_i$'s, $k C_{k-1}(n-m)$ choices for $y$ and a letter
that is missing in it. We only need to consider $m$ varying up to
$n-1$ since all words of richness $\geq n$ are $\sim_n$-equivalent, 
accounting for one additional possible \mbox{$\sim_n$ class}.

For the second inequality, assume that $k=2$ and $A_2=\{\tta,\ttb\}$.
A word $x\in A_2^*$ can be decomposed as a sequence of $m$ non-empty
blocks of the same letter, of the form, e.g., $x=\tta^{\ell_1}
\ttb^{\ell_2} \tta^{\ell_3} \ttb^{\ell_4} \cdots \tta^{\ell_m}$ (this
example assumes that $x$ starts and ends with $\tta$, hence $m$ is
odd). If two words like $x=\tta^{\ell_1} \ttb^{\ell_2} \tta^{\ell_3}
\ttb^{\ell_4} \cdots \tta^{\ell_m}$ and $x'=\tta^{\ell'_1}
\ttb^{\ell'_2} \tta^{\ell'_3} \ttb^{\ell'_4} \cdots \tta^{\ell'_m}$
have the same first letter $\tta$, the same alternation depth $m$,
and have $\min(\ell_i,n)=\min(\ell'_i,n)$ for all $i=1,\ldots,m$, then
they are $\sim_n$-equivalent. For a given $m>0$, there are
 $2$ possibilities for choosing the first letter and  $n^m$ non-equivalent
choices for the $\ell_i$'s. Finally, all words with alternation depths
$m\geq 2n$ are $\sim_n$-equivalent, hence we can restrict our
attention to $1\leq m\leq 2n-1$. The extra summand $2\,n^0$ in
Eq.~\eqref{eq-C2n} accounts for the single class with $m\geq 2n$ and
the single class with $m=0$.
\qed
\end{proof}

\begin{proposition}\label{prp:upper}\label{prop-upper-bound}
  For all $k,n>1$:
  \begin{equation*}
  C_k(n) < 2^{k \left( \frac{n+2k-3}{k-1} \right)^{k-1} \log_2 n \log_2 k}.
  \end{equation*}
\end{proposition}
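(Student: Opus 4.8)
The plan is to prove the bound by induction on $k$, using Proposition~\ref{prop-Ckn-comb-ineq} as the engine; throughout I write $f(k,n)$ as local shorthand for the exponent $k\bigl(\frac{n+2k-3}{k-1}\bigr)^{k-1}\log_2 n\,\log_2 k$ occurring in the statement, so that the goal is $C_k(n)<2^{f(k,n)}$. For the base case $k=2$, Equation~\eqref{eq-C2n} gives $C_2(n)\le 2\frac{n^{2n}-1}{n-1}$, and for $n>1$ one checks directly that this is $<2n^{2n}<n^{2n+2}=2^{2(n+1)\log_2 n}=2^{f(2,n)}$, using $f(2,n)=2(n+1)\log_2 n$.

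For the inductive step ($k\ge 3$), I assume $C_{k-1}(j)<2^{f(k-1,j)}$ for all $j>1$. Since $\sim_{j+1}$ refines $\sim_j$, the map $j\mapsto C_{k-1}(j)$ is nondecreasing, so in Proposition~\ref{prop-Ckn-comb-ineq} I may replace the factor $C_{k-1}(n-m)$ by $C_{k-1}(n-m+1)$; all the arguments $n-m+1$ that then appear lie in $\{2,\dots,n+1\}$, where the hypothesis applies. This yields $C_k(n)\le 1+\sum_{m=0}^{n-1}\bigl(k\,C_{k-1}(n-m+1)\bigr)^{m+1}<1+\sum_{m=0}^{n-1}2^{\Phi(m)}$, where $\Phi(m)=(m+1)\bigl(\log_2 k+f(k-1,n-m+1)\bigr)$. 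Bounding the sum of $n$ terms by $n$ times its largest gives $\log_2 C_k(n)\le\log_2(n+1)+\max_{0\le m\le n-1}\Phi(m)$, so it suffices to prove $\log_2(n+1)+\Phi(m)<f(k,n)$ for each $m$.

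The heart of the argument is the maximisation over $m$. Expanding $f(k-1,\cdot)$, the ``polynomial part'' of $\Phi(m)$ is $(m+1)\bigl(\frac{n-m+2k-4}{k-2}\bigr)^{k-2}$; differentiating in $m$ shows this is maximised at $m^{*}=\frac{n+k-2}{k-1}$, where one computes $\frac{n-m^{*}+2k-4}{k-2}=m^{*}+1=\frac{n+2k-3}{k-1}$, so the maximum equals exactly $\bigl(\frac{n+2k-3}{k-1}\bigr)^{k-1}$ — precisely the exponent in the statement. (The offset $2k-3$ is chosen exactly so that $m^{*}+1$ coincides with the base of that power, which is what makes the recursion close.) Substituting this back, and using that at the optimum the argument is $n-m^{*}+1=\frac{n(k-2)+1}{k-1}\le n$ (so the logarithmic factor there is at most $\log_2 n$), one obtains $\Phi(m)\le(k-1)\bigl(\frac{n+2k-3}{k-1}\bigr)^{k-1}\log_2 n\,\log_2(k-1)$ plus lower-order terms; this is strictly below $f(k,n)=k\bigl(\frac{n+2k-3}{k-1}\bigr)^{k-1}\log_2 n\,\log_2 k$ thanks to the gap between $k\log_2 k$ and $(k-1)\log_2(k-1)$, and the remaining slack — together with $\bigl(\frac{n+2k-3}{k-1}\bigr)^{k-1}\ge 2^{k-1}$ — absorbs the additive $(m+1)\log_2 k$ and the $\log_2(n+1)$ coming from the geometric bound.

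I expect the last step to be the genuine obstacle. One cannot bound $\Phi(m)$ by multiplying the maximum of its polynomial part by the maximum $\log_2(n+1)$ of its logarithmic factor separately: the polynomial part peaks near $m^{*}$ (where $n-m+1$ is a bit below $n$), whereas $\log_2(n-m+1)$ is largest at $m=0$, and the product of the two independent maxima already overshoots $f(k,n)$ for moderate $k$ and small $n$ (e.g.\ $k=5$, $n=2$). A correct proof must therefore bound the product $(m+1)\bigl(\frac{n-m+2k-4}{k-2}\bigr)^{k-2}\log_2(n-m+1)$ as a single quantity, splitting according to whether $n-m+1$ falls before or after the peak of the polynomial part, and then verify the resulting inequality between the $(k-1)\log_2(k-1)$-type terms and the $k\log_2 k$-type terms for all $k\ge 3$ and $n\ge 2$, with the tightest instances being the smallest values of $n$.
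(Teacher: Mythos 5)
Your outline has the same skeleton as the paper's proof: induction on $k$ with the base case $k=2$ read off from Eq.~\eqref{eq-C2n}, the recursion of Proposition~\ref{prop-Ckn-comb-ineq} as the engine, and as key step the maximization of $(m+1)\bigl(\tfrac{n-m+2k-4}{k-2}\bigr)^{k-2}$ over $m$, whose optimum $m^{*}=\tfrac{n+k-2}{k-1}$ and optimal value $\bigl(\tfrac{n+2k-3}{k-1}\bigr)^{k-1}$ you compute correctly --- this is precisely the inequality established in~\ref{app-ineq}. But the proof is not closed: you end by asserting that one \emph{must} bound the product of the polynomial part and the logarithmic factor $\log_2(n-m+1)$ jointly, and you leave that joint bound unproved. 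That assertion is mistaken, and the obstacle is one you created yourself at $m=0$. In Proposition~\ref{prop-Ckn-comb-ineq} the $m=0$ summand is $k\,C_{k-1}^{0}(n+1)\,C_{k-1}(n)=k\,C_{k-1}(n)$: the factor with argument $n+1$ occurs with exponent $0$, i.e.\ not at all. By rewriting every summand uniformly as $\bigl(k\,C_{k-1}(n-m+1)\bigr)^{m+1}$ you promote this term to $k\,C_{k-1}(n+1)$, and that is the sole source of the $\log_2(n+1)$ which makes your ``product of independent maxima'' overshoot (your $k=5$, $n=2$ check is right \emph{for your formulation}).

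The repair is simply to split off the $m=0$ term, as the paper does, and apply $C_{k-1}(n-m)\leq C_{k-1}(n-m+1)$ only for $m\geq 1$, where $n-m+1\leq n$. Then every invocation of the induction hypothesis carries a factor $\log_2(n-m+1)\leq\log_2 n$, the polynomial part and the logarithmic part can be maximized separately after all, and the residual prefactor (roughly $n\,k^{n}$ from bounding the sum by $n$ times its largest term) is absorbed by the slack between $(k-1)\log_2(k-1)$ and $k\log_2 k$, using $\log_2 n+n<\bigl(\tfrac{n+2k-3}{k-1}\bigr)^{k-1}\log_2 n$. With that one change your plan becomes the paper's proof; as written, the decisive verification is missing and your diagnosis of where the difficulty lies points in the wrong direction.
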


\begin{proof}
By induction on $k$. For $k=2$,  Eq.~\eqref{eq-C2n} yields:
\begin{align*}
C_2(n)
& \leq 2\frac{n^{2n}-1}{n-1} < n \frac{n^{2n+1}}{1}
\\
\shortintertext{since $n\geq 2$,}
&  = n^{2n+2} = 2^{2(n+1)\log_2 n}
\\
& = 2^{k \left( \frac{n+2k-3}{k-1} \right)^{k-1} \log_2 n \log_2 k}\:.
\end{align*}
For the inductive case,
Proposition~\ref{prop-Ckn-comb-ineq} yields:
  \begin{align*}
    C_{k+1}(n) &\leq 1 + \sum_{m=0}^{n-1} (k+1)^{m+1} C_{k}^m(n-m+1) C_k(n-m)
    \\ &= 1 + (k+1) C_k(n)
    \\ &\quad\ + \sum_{m=1}^{n-1} (k+1)^{m+1} C_{k}^m(n-m+1) C_k(n-m)
    \\ &< (k+1)^n C_k(n) + \sum_{m=1}^{n-1} (k+1)^{n} C_{k}^{m+1}(n-m+1)
    \\
\shortintertext{since $C_k(q)\leq C_k(q+1)$,}
       &< (k+1)^n 2^{k \left( \frac{n+2k-3}{k-1} \right)^{k-1} \log_2 n \log_2 k}
    \\ & \quad\ + \sum_{m=1}^{n-1} (k+1)^n 2^{k (m+1)\left(
      \frac{n-m+2k-2}{k-1} \right)^{k-1} \log_2 n \log_2 k}
    \\
\shortintertext{by ind.\ hyp.,}
       &< (k+1)^n \sum_{m=0}^{n-1} 2^{k (m+1)\left( \frac{n-m+2k-2}{k-1} \right)^{k-1} \log_2 n \log_2 k}.
  \end{align*}
Since $(m+1)\left( \frac{n-m+2k-2}{k-1} \right)^{k-1} \leq \left(
\frac{n+2k-1}{k} \right)^k$ for all $m \in \smallset{0, \ldots,n-1}$
---see~\ref{app-ineq}---, we may proceed with:
  \begin{align*}
    C_{k+1}(n) &< (k+1)^n \sum_{m=0}^{n-1} 2^{k \left( \frac{n+2k-1}{k} \right)^k \log_2 n \log_2 k}
    \\ &= n (k+1)^n 2^{k \left( \frac{n+2k-1}{k} \right)^k \log_2 n \log_2 k}
    \\ &= 2^{\log_2 n + n \log_2(k+1) + k \left( \frac{n+2k-1}{k} \right)^k \log_2 n \log_2 k}
    \\ &< 2^{\left(\log_2 n + n + k \left( \frac{n+2k-1}{k} \right)^k \log_2 n\right) \log_2 (k+1)}
    \\ &< 2^{(k+1) \left( \frac{n+2k-1}{k} \right)^k \log_2 n \log_2(k+1)}
 \end{align*}
since $\log_2 n+n<\left(\frac{n+2k-1}{k}\right)^k\log_2 n$ (see below).
This is the desired bound.

To see that $\log_2 n + n < \left( \frac{n+2k-1}{k} \right)^k \log_2
n$, we use
  \begin{align*}
    \left( \frac{n+2k-1}{k} \right)^k
    &
    > \left(\frac{n}{k}+1\right)^k
    = \sum_{j=0}^k \binom{k}{j} \cdot \left(\frac{n}{k}\right)^j
\\ &=
 1 +
 k\cdot  \left(\frac{n}{k}\right) +
\cdots \geq n+1
\:.
  \end{align*}
  This completes the proof.
  \qed
\end{proof}

By combining the two bounds in Propositions~\ref{prp:lower}
and~\ref{prp:upper} we obtain Theorem~\ref{theo-summary}, implying
that $\log\, C_k(n)$ is in $\Theta(n^{k-1} \log\, n)$ for fixed alphabet size $k$.



\section{Conclusion}
\label{sec-concl}

We proved that, over a fixed $k$-letter alphabet, $C_k(n)$ is
in $2^{\Theta(n^{k-1}\log\, n)}$. This
shows that $C_k(n)$ is not doubly exponential in $n$ as
Eq.~\eqref{eq-naive-upper-bound} and Theorem~\ref{theo-kppps} would allow.
It also is not simply exponential, bounded by a term of the form $2^{{f(k)\cdot
  n^c}}$ where the exponent $c$  does not depend on $k$.

We are still far from having a precise understanding of how $C_k(n)$
behaves and there are obvious directions for improving
Theorem~\ref{theo-summary}. For example, its bounds are not monotonic
in $k$ (while the bounds in Theorem~\ref{theo-kppps} are not monotonic in $n$) and it
only partially uses the combinatorial inequalities given by
Propositions~\ref{prop-Ckn-geq} and~\ref{prop-Ckn-comb-ineq}.


\paragraph{Acknowledgments}
We thank J.\ Berstel, J.-\'E.\ Pin and M.\ Zeitoun for their comments and suggestions.

\bibliographystyle{model1-num-names}
\bibliography{subwords}

\appendix
\section{Additional proofs}
\label{app-ineq}

We prove that $(m+1)\left( \frac{n-m+2k-2}{k-1} \right)^{k-1} \leq
\left(\frac{n+2k-1}{k} \right)^k$ for all $m=0,\ldots,n-1$, an
inequality that was used to establish Proposition~\ref{prp:upper}.
\\

For $k>0$ and $x,y\in\Reals$, let
\begin{align*}
F_k(x) &\egdef \left(\frac{x+2k-1}{k}\right)^k
\:,
\\
G_{k,x}(y) & \egdef (y+1)F_k(x-y+1) = \frac{(y+1)(x-y+2k)^k}{k^k}
\:.
\end{align*}
Let us check that $G_{k,x}\bigl(\frac{k+x}{k+1}\bigr)=F_{k+1}(x)$ for
any $k>0$ and $x\geq 0$:
\begin{align*}
G_{k,x}\left(\frac{k+x}{k+1}\right)
=& \left( \frac{k+x}{k+1} + 1 \right) \frac{1}{k^k}\left(x - \frac{k+x}{k+1} + 2k \right)^k \\
=& \frac{x+2k+1}{k+1}\:\frac{1}{k^k}\left(\frac{kx + 2k^2 + k}{k+1}\right)^k \\
=& \frac{x+2k+1}{k+1}\:\frac{1}{k^k}\left(\frac{k}{k+1}\right)^k\left(x + 2k + 1\right)^k \\
=& \left(\frac{x+2k+1}{k+1}\right)^{k+1} = F_{k+1}(x) \:.
\tag{$\dagger$}
\end{align*}

We now claim that $G_{k,x}(y)\leq F_{k+1}(x)$ for all $y\in[0,x]$. For
$n,k\geq 2$, the claim entails $G_{k-1,n}(m)\leq F_k(m)$, i.e.
$(m+1)\left( \frac{n-m+2k-2}{k-1} \right)^{k-1} \leq \left(
\frac{n+2k-1}{k} \right)^k$, for  $m=0,\ldots,n-1$ as announced.

\begin{proof}[of the claim]
Let $y_{\max}\egdef\frac{k+x}{k+1}$. We prove that $G_{k,x}(y)\leq
G_{k,x}(y_{\max})$ and conclude using Eq.~($\dagger$): $G_{k,x}$ is
well-defined and differentiable over $\Reals$, its derivative is
\begin{align*}
G_{k,x}'(y) &= \frac{(x-y+2k)^k - (y+1)k(x-y+2k)^{k-1}}{k^k} \\
&= \frac{(x-y+2k)^{k-1}}{k^k}\bigl( (x-y+2k) - (y+1)k \bigr) \\
&= \frac{(x-y+2k)^{k-1}}{k^k}\bigl(x + k - y (k+1) \bigr)
\:.
\end{align*}
Thus $G'_{k,x}(y)$ is $0$ for $y=y_{\max}$, is strictly positive for
$0\leq y < y_{\max}$, and strictly negative for $y_{\max}<y\leq x$.
Hence, over $[0,x]$, $G_{k,x}$ reaches its maximum at $y_{\max}$.
\qed
\end{proof}

\section{First values for $C_k(n)$}

We computed the first values of $C_k(n)$ by a brute-force method that
listed all minimal representatives of $\sim_n$ equivalence classes
over a $k$-letter alphabet. Here $x$ is \emph{minimal} if $x\sim_n y$
implies ($\size{x} < \size{y}$ or ($\size{x} = \size{y}$ and  $x\leq_{\text{lex}} y$)).
Every equivalence class has a unique minimal representative.
Note that if a concatenation $xx'$ is minimal
then both $x$ and $x'$ are.
Therefore, when listing the minimal
representatives in order of increasing length, it is possible to stop
when, for some length $\ell$, one finds no minimal representatives. In
that case we know that there cannot exist minimal representatives of
length $>\ell$.

\begin{sidewaystable}
\centering
\renewcommand{\arraystretch}{1.13}
\begin{tabular}{c|r|r|r|r|r|r|r|r||r}
& \multicolumn{1}{|c|}{$k=1$} & \multicolumn{1}{|c|}{$k=2$} & \multicolumn{1}{|c|}{$k=3$} & \multicolumn{1}{|c|}{$k=4$} & \multicolumn{1}{|c|}{$k=5$} & \multicolumn{1}{|c|}{$k=6$} & \multicolumn{1}{|c|}{$k=7$} &  \multicolumn{1}{|c|}{$k=8$} & \multicolumn{1}{|c}{$k$}
\\
\hline
$n=0$ & $1$ & $1$ & $1$ & $1$& $1$& $1$& $1$ & $1$ & $1$
\\
\hline
$n=1$ & $2$ & $4$ & $8$ & $16$ & $32$ & $64$ & $128$ & $256$ & $2^k$
\\
\hline
$n=2$ & $3$ & $16$ & $152$ & $2\,326$ & $52\,132$  & $1\,602\,420$ & $64\,529\,264$ &  $ \geq 173 \cdot 10^7$ &
\\
\hline
$n=3$ & $4$ & $68$ & $5\,312$ & $1\,395\,588$ & $1\,031\,153\,002$                  & $\geq 23 \cdot 10^7$       & & &
\\
\hline
$n=4$ & $5$ & $312$ & $334\,202$ & $\geq 73 \cdot 10^7$                     &  &&&
\\
\hline
$n=5$ & $6$ & $1\,560$ & $38\,450\,477$ &                    &     &    &&
\\
\hline
$n=6$ & $7$ & $8\,528$ & $\geq 39 \cdot 10^7$  &                            &&   &&
\\
\hline
$n=7$ & $8$ & $50\,864$ & &                           &    &&&
\\
\hline
$n=8$ & $9$ & $329\,248$ & &                           &        & &&
\\
\hline
$n=9$ & $10$ & $2\,298\,592$ & &                         &        &   &&
\\
\hline
$n=10$ & $11$ & $17\,203\,264$ & &                         &      &     &&
\\
\hline
$n=11$ & $12$ & $137\,289\,920$ & &                         &      &     &&
\\
\hline
\hline
$n$ & $n+1$ & & &                                   &   &&&&
\end{tabular}
\caption{Computed values for $C_{k}(n)$}
\label{table-C}
\end{sidewaystable}

The
cells left blank in the table were not computed for lack of
memory.

\end{document}